\documentclass[conference]{IEEEtran}
\usepackage{cite}
\usepackage{amsmath,amssymb,amsfonts}
\usepackage{amsthm}
\newtheorem{theorem}{Theorem}

\newtheorem{proposition}{Proposition}[theorem]
\usepackage{bm}
\usepackage{algorithm}
\usepackage{algpseudocode}
\algdef{SE}[SUBALG]{Indent}{EndIndent}{}{\algorithmicend\ }%
\algtext*{Indent}
\algtext*{EndIndent}
\usepackage{graphicx}
\usepackage{textcomp}
\usepackage{xcolor}


\mathchardef\mhyphen="2D

\newcommand\Vector[1]{\bm{#1}}

\newcommand\vb{{\Vector{b}}}

\newcommand\vg{{\Vector{g}}}
\newcommand\vh{{\Vector{h}}}

\newcommand\vp{{\Vector{p}}}
\newcommand\vq{{\Vector{q}}}

\newcommand\vs{{\Vector{s}}}

\newcommand\vv{{\Vector{v}}}
\newcommand\vw{{\Vector{w}}}
\newcommand\vx{{\Vector{x}}}
\newcommand\vy{{\Vector{y}}}

\newcommand\vtheta{{\Vector{\theta}}}

\newcommand\vrho{{\Vector{\rho}}}






\newcommand\sH{{\mathbb{H}}}
\newcommand\sI{{\mathbb{I}}}

\newcommand\sU{{\mathbb{U}}}




\newcommand\Prob{\mathbf{Prob}}
\newcommand\givenp[1][]{\:#1\vert\:}
\newcommand\SNR{\mathrm{SNR}}
\newcommand\nSNR{\overline{\mathrm{SNR}}}

\newcommand\E{\mathbb{E}}

\newcommand\R{\mathbb{R}}

\newcommand\Var{\mathrm{Var}}


\DeclareMathOperator*{\argmin}{arg\,min}

\DeclareMathOperator{\sign}{sign}

\begin{document}

\title{Cluster-Based Cooperative Digital Over-the-Air Aggregation for
Wireless Federated Edge Learning
}

\author{
\IEEEauthorblockN{Ruichen Jiang, Sheng Zhou}
\IEEEauthorblockA{Beijing National Research Center for Information Science and Technology}
\IEEEauthorblockA{Department
of Electronic Engineering, Tsinghua University,
Beijing 100084, P.R.~China}
\IEEEauthorblockA{Emails: jrc16@mails.tsinghua.edu.cn, sheng.zhou@tsinghua.edu.cn}
}

\maketitle

\begin{abstract}
In this paper, we study a federated learning system at the wireless edge
that
uses over-the-air computation (AirComp).
In such a system, users transmit their messages over a multi-access channel
concurrently to achieve fast model aggregation.
Recently, an AirComp scheme based on digital modulation
has been proposed featuring one-bit
gradient quantization and truncated channel inversion at users
and a majority-voting based decoder at the
fusion center (FC).
We propose an improved digital AirComp scheme to relax its requirements on the transmitters,
where users perform phase correction and transmit with full power.
To characterize the
decoding failure probability at the FC,
we introduce
the normalized detection signal-to-noise ratio (SNR), which can be interpreted as the
effective participation rate
of users.
To mitigate wireless fading, we further propose a cluster-based
system and design the relay selection scheme based on the normalized
detection SNR. By local data fusion within each cluster and relay selection,
our scheme can fully exploit spatial diversity to increase the effective
number of voting users and accelerate model convergence.
\end{abstract}


\section{Introduction}
With the advent of Internet of Things (IoT),
increasing number of mobile devices---such as
smart phones, wearable devices and wireless sensors---will contribute data to wireless systems.
The massive data distributed over those devices
provide opportunities as well as challenges for
deploying data-driven machine learning applications.
Recently, researchers propose the concept of edge learning to
achieve fast access to the enormous data on edge devices,
where the deployments of machine learning models are pushed from the cloud to
the network edge \cite{intelligent_edge}.
Federated edge learning \cite{federated_l} is a popular framework for distributed machine learning
tasks. Under the coordination of a fusion center (FC), it can fully utilize the massive datasets while protecting
the user's privacy. In such a framework, the raw data is kept locally and users only upload
model updates to the FC, who then broadcasts the aggregated model to all users.
However, compared with the limit bandwidth available at the network edge,
current machine learning models consist of a huge number of parameters, which
makes the communication overhead the main bottleneck for federated edge learning \cite{federated_l}.

To tackle this problem, over-the-air aggregation (also
called over-the-air computation, or AirComp) is introduced in
\cite{broadband_aircomp, fl_fading,obda}. The basic idea
is to utilize the superposition property of the wireless multi-access channel (MAC)
to aggregate the uploaded model updates over the air. Compared with the traditional
system where the MAC is partitioned into orthogonal channels to ensure no interference,
such approach can achieve much shorter latency and hence faster model
convergence \cite{broadband_aircomp, fl_fading,obda}. However,
most AirComp systems in the literature use uncoded analog modulation, which can be
difficult to implement in today's widely-used digital communication systems.
Recently, an AirComp scheme using digital modulation is proposed in \cite{obda}
by integrating AirComp with a specific learning algorithm called signSGD with majority vote \cite{signsgd}.
However, a common drawback of all the schemes mentioned above is that they require
full channel side information (CSI) and the ability to arbitrarily adjust signal power
at the transmitters, which can be impractical for large-scale networks with simple nodes.

Our key observation in this paper is that while such requirements are necessary for analog AirComp in order to achieve magnitude alignment at the receiver,
this is not the case for the digital AirComp
proposed in \cite{obda}.
Specifically, we propose an improved digital AirComp scheme based on phase correction,
where users first correct the phase shift
caused by wireless fading before sending the signals over the MAC with full power.
The received signals at the FC are the sums of users' messages weighted by
channel gains,
which is decoded in a majority-vote fashion.
Compared with previous works, our scheme only requires
the phases of channel gains at the transmitters and does not need
adaptive power control.
To analyze the decoding failure probability, the
normalized detection signal-to-noise ratio (SNR) is introduced.
We find that the degradation caused by wireless fading can be regarded as reducing
the number of participating users.
A cluster-based system is proposed
to mitigate such effect by exploiting spatial diversity.
By local data fusion and relay selection,
our system can reduce the failure probability to a level close to the ideal system with perfect channels.
The
simulations validate that our scheme can increase the effective participation
rate of users and accelerate model convergence.

\emph{Notations}: We use $\vx[i]$ to denote the $i$-th component of a vector $\vx$,
$\mathcal{CN}(0,\sigma^2)$ to denote a zero-mean circularly symmetric complex Gaussian
distribution with variance $\sigma^2$, $\mathcal{N}(0,\sigma^2)$ to denote a
zero-mean real Gaussian distribution with variance $\sigma^2$, and $\mathrm{j}\mkern1mu$ to denote the imaginary unit.

\section{System Models}
\subsection{Learning Model}
The learning task we consider is to train a
common machine learning model on datasets
distributed among $K$ users in the network.
It can be formulated as
\begin{equation}\label{eq:erm model}
    \min_{\vw\in \R^d}F(\vw) := \frac{1}{K}\sum_{k=1}^K f_k(\vw),
\end{equation}
where $f_k(\vw)$ denotes the local loss function
parameterized by $\vw\in \R^d$ with respect to the dataset on user $k$.
We will refer to $F(\vw)$ as the global loss function.

To solve \eqref{eq:erm model}, a straightforward approach is letting
all users upload their datasets and running centralized learning algorithms
at the FC. However, not only will this raise privacy concerns for
datasets involving sensitive information,
but also will consume excessive bandwidth and hence
incur unbearable delays.
Therefore, we employ the federated learning
framework \cite{federated_l} 
using signSGD with majority vote as shown in
Algorithm \ref{alg:signsgd}, which is
a communication-efficient
distributed learning algorithm proposed in \cite{signsgd}.
Specifically, at iteration round~$n$, user~$k$
computes the stochastic gradient $\vg_k(\vw^{(n)})\in \R^d$ at
the current model parameters $\vw^{(n)}$ and sends their signs to the FC. The FC then aggregates the data by majority vote:
we may regard each user's local update as ``voting" between $\{+1,\,-1\}$ for each gradient component,
and the global update $\tilde{\vg}^{(n)}\in \R^d$ is determined by what most users agree with.
Finally, the FC broadcasts $\tilde{\vg}^{(n)}$ to all users who then
update the parameter vector as in line \ref{line:update}.

With proper choices of learning rate and
local batch size, it is shown in \cite{signsgd} that Algorithm~\ref{alg:signsgd} can
achieve a similar convergence rate as distributed SGD while greatly reducing the
communication cost. However, their analysis assumes
ideal channels between
the FC and all users, which is unrealistic
in practical wireless communication systems. In the next section,
we will model the MAC between the FC and users to
take wireless impairments into account.

\begin{algorithm}[t]
    {
    \caption{signSGD with majority vote\cite{signsgd}}\label{alg:signsgd}
    \begin{algorithmic}[1]
      \State \textbf{Input:} learning rate $\eta$, initial model parameters $\vw^{(0)}$
      \For{iteration round $n=1,2,\ldots,N$}
      \State \textbf{Users:}
      \Indent
      \For{$k=1,2,\ldots,K$ in parallel}
      \State Compute the stochastic gradient $\vg_k(\vw^{(n)})$
      \State Send $\sign(\vg_k(\vw^{(n)}))$ 
      \EndFor
      \EndIndent
      \State \textbf{FC:}
      \Indent
      \State Computes $\tilde{\vg}^{(n)} \leftarrow \sign\big[\sum_{k=1}^K \sign(\vg_k(\vw^{(n)}))\big]$ 
      \State Broadcasts $\tilde{\vg}^{(n)}$ to all users
      \EndIndent
      \State \textbf{Users:}
      \State Update model parameters by $\vw^{(n+1)}\leftarrow\vw^{(n)}-\eta \tilde{\vg}^{(n)}$  \label{line:update}
      \EndFor
    \end{algorithmic}
    }
\end{algorithm}

\subsection{Communication Model}\label{subsec:communication_model}
We consider a single-cell network where locations of users are
uniformly distributed in the circle centered at the FC
with radius $R$.
Therefore, the distances between users
and the FC are independent and identically distributed (i.i.d.)
random variables, whose probability distribution function (pdf) is given by
\begin{equation}\label{eq:location_dist}
    f(r) = \frac{2r}{R^2},\;0<r<R.
\end{equation}
To capture the effect of large-scale fading,
we assume the path loss at distance $r$ is
    \begin{equation}\label{eq:path_loss}
        P_L(r) =
        \begin{cases}
            1, & r\leq r_0;\\
            (\frac{r}{r_0})^{-\alpha}, & r\geq r_0.
        \end{cases}
    \end{equation}
Here, $P_L(r)$ is set to a constant within $r_0$
to avoid singularity, and
$\alpha\in (2,4)$ is the path loss exponent.
Moreover, the MAC also suffers from frequency-selective
fading and we employ orthogonal frequency division multiple access (OFDMA)
to divide the broadband channel into $M$ subchannels.

For ease of exposition, we use binary phase shift keying (BPSK) on each subchannel,
and it can be readily extended to 4-ary quadrature amplitude modulation (4-QAM).
Note that in Algorithm \ref{alg:signsgd}, the FC
is interested in the outcome of majority vote $\tilde{\vg}^{(n)}$ rather than the individual
local update at each user. This motivates us to use uncoded transmission, where
the gradient signs are mapped directly into BPSK symbols, and let all users
send their signals over the MAC concurrently.
Denote the transmitted symbols from user $k$ at iteration round $n$ by
$\vs^{(n)}_k := \sign(\vg_k(\vw^{(n)}))$, then the received signals $\vy^{(n)}[i]$ ($i=1,2,\ldots,d$) at the FC are given by
\begin{equation}\label{eq:aircomp}
        \vy^{(n)}[i]=\sum_{k=1}^K \sqrt{P_L(r^{(n)}_k)} \vh^{(n)}_{k}[i] \vb^{(n)}_k[i] \vs^{(n)}_k[i]+\vv^{(n)}[i],
\end{equation}
where $r^{(n)}_k$ is the distance between user $k$ and the FC,
$\vh^{(n)}_{k}$ is the wireless fading coefficient,
$\vb^{(n)}_k$ is the amplifying coefficient at user $k$ and $\vv^{(n)}$ is the additive
Gaussian white noise. Statistically, we assume that $\{\vh^{(n)}_{k}[i]\}$ are
independent and identically distributed (i.i.d.) according to $\mathcal{CN}(0,1)$ and
$\{\vv^{(n)}[i]\}$ i.i.d. according to $\mathcal{CN}(0,N_0)$.
Moreover, we impose a power budget $P$ on each OFDM symbol of each user.
Since one OFDM symbol consists of $M$ BPSK symbols, this implies
\begin{equation*}
    |\vb^{(n)}_k[i]|^2\leq P_s :=\frac{P}{M},\;i=1,2,\ldots,d.
\end{equation*}
Finally, the FC uses the detected BPSK symbols as the aggregated results:
\begin{equation}\label{eq:bpsk_detect}
    \tilde{\vg}^{(n)}[i] = \sign (\vy[i]),\;i=1,2,\ldots,d.
\end{equation}

Such idea of integrating AirComp with signSGD with majority vote is first proposed in \cite{obda},
where truncated channel inversion is adopted to achieve magnitude alignment at the FC.
However, note that in \eqref{eq:bpsk_detect} the FC only preserves the
signs of the received signals, suggesting that
the signals from different users should be aligned in phase
but not necessarily in amplitude. Hence, we let all users
compensate the phase drift caused by $\vh^{(n)}_{k}[i]$ and transmit
their signals in full power, which means
\begin{equation}\label{eq:phase_correction}
    \vb^{(n)}_k[i] = \sqrt{P_s}\exp(-\mathrm{j}\mkern1mu \vtheta_k^{(n)}[i]),
\end{equation}
where $\vtheta_k^{(n)}[i]=\arg(\vh^{(n)}_{k}[i])$. We refer to our scheme as
digital AirComp based on phase correction (digital AirComp-PC). From \eqref{eq:phase_correction},
we can see that such scheme does not
require full CSI or arbitrary power adjustment at the transmitters.


It is worth noting that
digital AirComp-PC is mathematically equivalent to the equal gain combining
(EGC) scheme\cite{decision_fusion} in the distributed detection literature.
In \cite{decision_fusion}, the authors consider aggregating decisions
from wireless sensor nodes over Rayleigh fading channels and the EGC scheme
is shown to be robust
for most SNR range. The difference is that there the phase correction and
signal aggregation are performed at the FC, while in ours
the phase correction is performed at the transmitters and the signal aggregation
is achieved over the air by the superposition property of wireless channels.
This enables us to attain higher spectral efficiency with similar performance guarantees.

\subsection{Cluster-based Cooperative Model}\label{subsec:cluster}
Our system model is shown in Fig.~\ref{fig:relay_cooperation}.
First of all, at the beginning of iteration round $n$, we assume the users are already
divided into $C$ clusters denoted by $\sU_1^{(n)},\sU_2^{(n)},\ldots,\sU_C^{(n)}$,
which satisfies $\cup_{c=1}^C \sU^{(n)}_c = \{1,2,\ldots, K\}$
and $\sU^{(n)}_i\cap\sU^{(n)}_j=\emptyset,\;\forall i\neq j$.
In each cluster, $L$ users are chosen as relays uniformly and randomly.
The majority vote consists of the following two steps:
\begin{enumerate}
    \item Within cluster $\sU^{(n)}_c$, we
    select one of the relays as the local fusion center for each component of the
    gradients. The relay selection scheme will be discussed in Section \ref{sec:relay_cooperation}.
    Let
    $\sI^{(n)}_{c,l}$ denote the indices of components collected by relay $l$, then we have
    $\cup_{l=1}^L \sI^{(n)}_{c,l} = \{1,2,\ldots, d\}$ and
    $\sI^{(n)}_{c,i}\cap\sI^{(n)}_{c,j}=\emptyset,\;\forall i\neq j$.
    For every component $i\in \sI^{(n)}_{c,l}$, relay $l$ collects the result of
    majority vote in cluster $\sU^{(n)}_c$ as
    \begin{equation}\label{eq:majority_vote_within_cluster}
        \tilde{\vs}^{(n)}_c[i] = \sign\big(\sum_{k\in \sU^{(n)}_c} \vs^{(n)}_k[i] \big),
    \end{equation}
    where $\vs^{(n)}_k[i]\in \{+1,-1\}$ is the sign of the $i$-th component of the local stochastic
    gradient at user $k$.
    \item The selected relay nodes report the fused result $\tilde{\vs}^{(n)}_c[i]$
    to the FC using the digital AirComp-PC scheme
    discussed in Section \ref{subsec:communication_model}.
\end{enumerate}
\begin{figure}[t]
    \centering
    \includegraphics[width=0.8\linewidth]{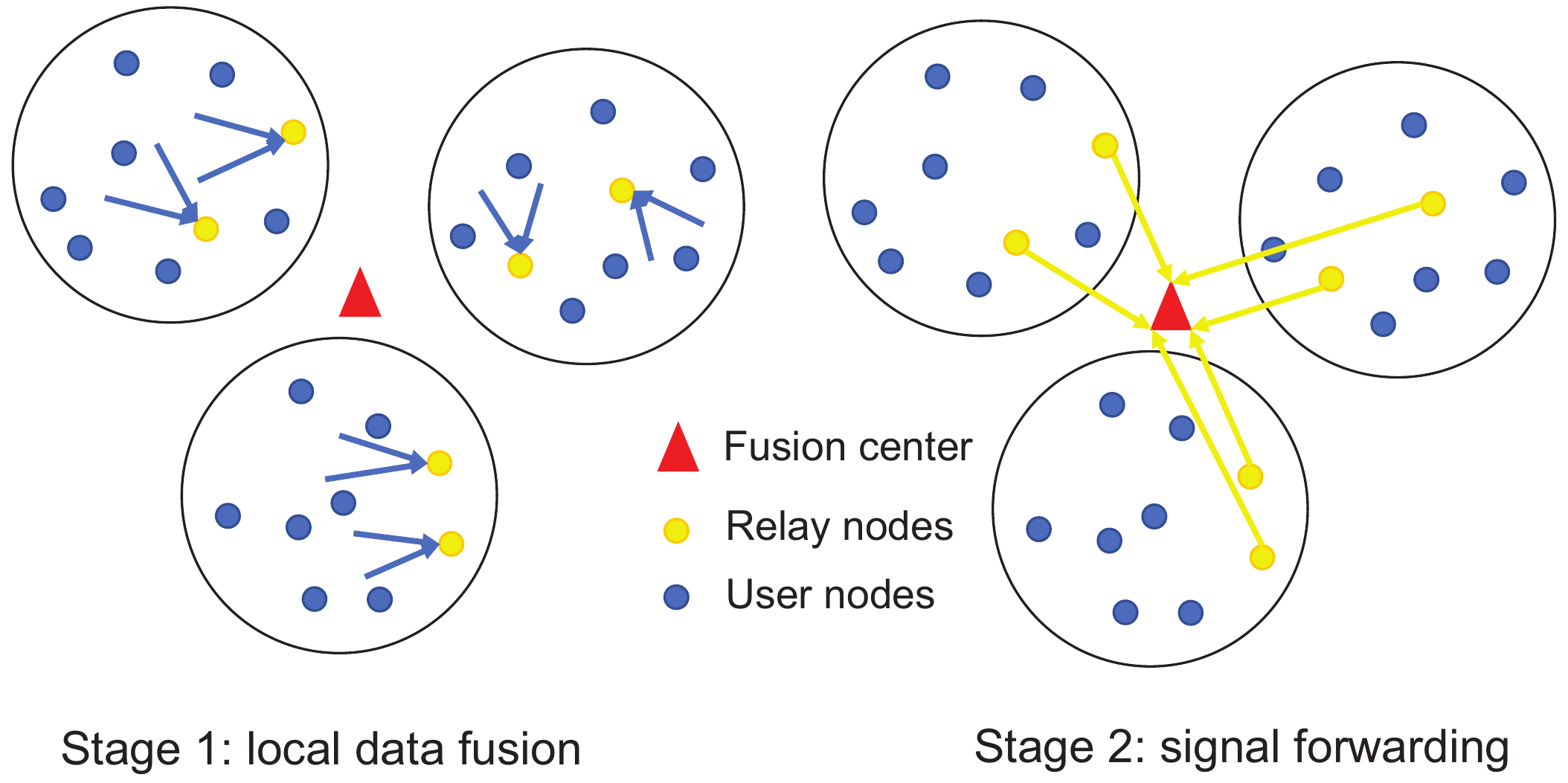}
    \caption{The illustration of cluster-based cooperation.}
    \label{fig:relay_cooperation}
\end{figure}
For simplicity of analysis, we make the following assumptions:
\begin{enumerate}
    \item All the user clusters are of equal size $K_C$, i.e., $|\sU^{(n)}_1|=|\sU^{(n)}_2|\ldots=|\sU^{(n)}_C|=K_C$.
    \item The local decision fusion within each cluster is perfect.
    \item The $L$ relays in any cluster are the same distance away from the FC, denoted by
    $r_c^{(n)}$. Moreover, $r_c^{(n)}$ are i.i.d. random variables with the pdf given in \eqref{eq:location_dist}.
    \item The FC has the instantaneous CSI of the channels between
    the relay nodes in each cluster and the FC,
    which is denoted by $\sH^{(n)}_c := \{\vrho^{(n)}_{c,1},\vrho^{(n)}_{c,2},\ldots,\vrho^{(n)}_{c,L}\}\,(c=1,2,\ldots,C)$.
    Here $\vrho^{(n)}_{c,l}:=\sqrt{P_L(r^{(n)}_c)}|\vh^{(n)}_{c,l}|\;(l=1,2,\ldots,L)$ are the channel gains and
    $\vh^{(n)}_{c,l}\sim \mathcal{CN}(0,1)$ are the Rayleigh fading coefficients between relay $l$ and the FC.
    The FC makes the relay selection based on $\sH^{(n)}_c$
    and notifies the selected relay nodes.
\end{enumerate}
Similar to \eqref{eq:aircomp}, the received signal $\vy$ can be written as
\begin{equation*}
    \vy[i]=\sum_{c=1}^C \vrho^{(n)}_c[i] \sqrt{P_s}  \tilde{\vs}^{(n)}_c[i]+\vv^{(n)}[i],\;\;\;i=1,2,\ldots,d,
\end{equation*}
where $\vrho^{(n)}_c[i]\in \sH^{(n)}_c \cup \{0\}$ depends on our relay selection scheme and
$\vv^{(n)}[i]\sim \mathcal{CN}(0,N_0)$ is the i.i.d. additive white Gaussian noise.

It is worth noting that similar cluster-based methods have been applied to cooperative spectrum sensing
to merge sensing observations from different users
\cite{cluster_cooperative_spectrum_sensing,Energy_efficient_tech}.
However, their focuses are on the false alarm probability or energy efficiency
while ours are on the failure probability that will be introduced in Section \ref{sec:fail_prob_analysis}.

\section{Failure Probability Analysis}\label{sec:fail_prob_analysis}
Prior works have shown that the convergence speed of
Algorithm \ref{alg:signsgd} crucially depends on the decoding failure probability $\vq^{(n)}[i]$ ($i=1,2,\ldots,d$)
defined as
\begin{equation}
    \vq^{(n)}[i] = \Prob\big[\tilde{\vg}^{(n)}[i]\neq \sign\big(\nabla F(\vw^{(n)})[i]\big) \givenp \vw^{(n)} \big],
\end{equation}
which is the probability that the result of majority vote mismatches the sign of the true gradient.
In particular, smaller failure probability leads to a higher convergence speed,
and increasing the number of participating users can attain similar speedup to distributed SGD \cite{signsgd}.
In this section, we will discuss the failure probability of the digital AirComp-PC scheme proposed in
Section \ref{subsec:communication_model} and show that
wireless impairments effectively decrease the participating rate of voting users.

In our scheme, the failure probability depends on the following three factors:
\begin{enumerate}
    \item The local success probability defined as
    \begin{equation*}
        \vp^{(n)}_k[i]:=\Prob\big[\vs^{(n)}_k[i]= \sign\big(\nabla F(\vw^{(n)})[i]\big) \givenp \vw^{(n)}\big],
    \end{equation*}
    which is the probability that the sign of the local stochastic gradient
    is the same as the sign of the true gradient;
    \item The number of voting users $K$;
    \item The wireless channel conditions between the FC and users.
\end{enumerate}
We first discuss the local success probability. Intuitively, at the beginning stage of
model training, the global loss function $F$ has a relatively large gradient at
the point $\vw^{(n)}$, and hence with high probability the sign of the
local stochastic gradient is correct. As the model training proceeds,
$\vw^{(n)}$ will gradually approach the stationary point of
$F$ and the true gradient will be close to zero. This means that the noise terms will dominate
the local stochastic gradients resulting in almost random guess at each user.
Therefore, we
will set the local success probability to near 0.5 and focus on the other
two factors in the following.

\subsection{Detection SNR} \label{subsec:detection SNR}
Because of symmetry, we only need to analyze the failure probability of
a specific gradient component $\nabla F(\vw^{(n)})[i]$.
Assume $\nabla F(\vw^{(n)})[i]>0$ without loss of generality.
To simplify notation, we omit the superscript
$(n)$ and the index $[i]$ in the following. The model can be now expressed as
\begin{equation}\label{eq:mac}
    y = \sum_{k=1}^K \rho_k s_k+ v,
\end{equation}
where $y$ is the received signal at the FC, $\rho_k$ is the channel coefficient
between the FC and user $k$, $s_k\in\{+1,-1\}$ is the sign of the stochastic
gradient at user $k$, and $v\sim \mathcal{N}(0, N_0/(2P_s))$ is the additive white Gaussian noise.
By our assumptions, $\{s_k\}_{k=1}^K$ are independent from each other and
$\Prob[s_k = 1] = p_k,\;\Prob[s_k = -1] = 1-p_k$ where $p_k>1/2$.
The failure probability is given by $q=\Prob{[y<0]}$.

We first assume the channel coefficients $\vrho$ are given parameters.
In general, the closed form of $q$ is intractable.
Hence, we provide an upper bound by using the properties of sub-Gaussian
random variables \cite{sub_gaussian}.

\begin{theorem}\label{thm:chernoff}
    Define
    \begin{equation}\label{eq:sigma}
        \tau^2 : = \sum_{k=1}^K \rho_k^2+\frac{N_0}{2P_s}.
    \end{equation}
    If $y$ is given by \eqref{eq:mac}, then we have
    \begin{equation}\label{eq:sub_gaussian_ineq}
        q = \Prob({y<0})\leq \exp(-\frac{\E[y]^2}{2 \tau^2}),
    \end{equation}
    where
    \begin{equation}\label{eq:y_expec}
        \E[y] = \sum_{k=1}^K \rho_k (2p_k-1).
    \end{equation}
\end{theorem}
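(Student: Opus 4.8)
The plan is to prove \eqref{eq:sub_gaussian_ineq} by a Chernoff (exponential Markov) argument, viewing $y$ as a sum of independent sub-Gaussian random variables. First I would record $\E[y]$. Since $\Prob[s_k=1]=p_k$ gives $\E[s_k]=p_k-(1-p_k)=2p_k-1$, and $\E[v]=0$, linearity of expectation yields $\E[y]=\sum_{k=1}^K \rho_k(2p_k-1)$, which is \eqref{eq:y_expec}. I would also note here that $\rho_k\ge 0$ (it is a channel gain, $\rho_k=\sqrt{P_L(r_k)}\,|h_k|$) and $p_k>1/2$, so $\E[y]\ge 0$; this nonnegativity is needed later when I optimize over the Chernoff parameter.

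Next comes the core step. For any $\lambda\ge 0$, Markov's inequality applied to $e^{-\lambda y}$ gives $q=\Prob(y<0)=\Prob(e^{-\lambda y}>1)\le \E[e^{-\lambda y}]=e^{-\lambda\E[y]}\,\E[e^{-\lambda(y-\E[y])}]$. By independence of $\{s_k\}_{k=1}^K$ and $v$, the last expectation factorizes as $\prod_{k=1}^K \E[e^{-\lambda\rho_k(s_k-\E[s_k])}]\cdot\E[e^{-\lambda v}]$. Each $\rho_k s_k$ takes values in the interval $[-\rho_k,\rho_k]$ of width $2\rho_k$, so Hoeffding's lemma bounds its centered moment generating function by $\E[e^{-\lambda\rho_k(s_k-\E[s_k])}]\le \exp(\lambda^2\rho_k^2/2)$; and since $v\sim\mathcal{N}(0,N_0/(2P_s))$ is exactly Gaussian, $\E[e^{-\lambda v}]=\exp(\lambda^2 N_0/(4P_s))$. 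Multiplying these factors gives $\E[e^{-\lambda(y-\E[y])}]\le \exp(\lambda^2\tau^2/2)$ with $\tau^2$ as in \eqref{eq:sigma}, hence $q\le \exp\!\big(-\lambda\E[y]+\lambda^2\tau^2/2\big)$ for every $\lambda\ge 0$.

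Finally I would minimize the exponent over $\lambda\ge 0$. The quadratic $-\lambda\E[y]+\lambda^2\tau^2/2$ attains its minimum at $\lambda^{\star}=\E[y]/\tau^2$, which is a valid (nonnegative) choice precisely because $\E[y]\ge 0$; substituting it back produces $\exp(-\E[y]^2/(2\tau^2))$, establishing \eqref{eq:sub_gaussian_ineq}. Equivalently, one can phrase this abstractly: $y-\E[y]$ is a sum of independent sub-Gaussian variables with variance proxies $\rho_1^2,\dots,\rho_K^2$ and $N_0/(2P_s)$, hence sub-Gaussian with variance proxy $\tau^2$, and \eqref{eq:sub_gaussian_ineq} is the standard one-sided sub-Gaussian tail bound evaluated at the deviation level $t=\E[y]$.

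I do not anticipate a genuine obstacle here; the only point demanding a little care is verifying that $\rho_k s_k$ is sub-Gaussian with variance proxy exactly $\rho_k^2$ — i.e., that the width-$2\rho_k$ support interval feeds through Hoeffding's lemma to the factor $\exp(\lambda^2\rho_k^2/2)$ — together with keeping the signs straight (using $\rho_k\ge 0$ so the support is $[-\rho_k,\rho_k]$, and $\E[y]\ge 0$ so the optimal $\lambda$ is admissible). The remaining manipulations are routine.
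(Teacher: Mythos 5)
Your proof is correct and is essentially the paper's argument unpacked: the paper invokes the sub-Gaussian framework as a black box (Hoeffding's lemma gives $s_k-\E[s_k]\in\mathrm{subG}(1)$, the weighted-sum property gives $y-\E[y]\in\mathrm{subG}(\tau^2)$, and the one-sided tail bound at deviation $\E[y]$ gives \eqref{eq:sub_gaussian_ineq}), which is exactly the Chernoff computation you carry out explicitly. Your version is merely more self-contained, and you are right that the only points needing care --- $\rho_k\ge 0$ so the support of $\rho_k s_k$ has width $2\rho_k$, and $\E[y]\ge 0$ so the optimal $\lambda^\star=\E[y]/\tau^2$ is admissible --- are handled correctly.
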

\begin{proof}
    First recall the definition of sub-Gaussian random
    variables: a zero-mean random variable $X$ is sub-Gaussian with
    parameter $\theta$ if it satisfies
        $\E[\exp(t X)]\leq \exp(\frac{\theta^2t^2}{2})\;(\forall t\in \R)$
    and we write $X\in \mathrm{subG}(\theta^2)$.
    We will use the following two key properties \cite{sub_gaussian}:
    \begin{enumerate}
        \item \label{item:tail_bound} If ${X}\in \mathrm{subG}(\theta^2)$,
        then we have the tail bound $\Prob(X>x)\leq \exp(-\frac{x^2}{2\theta^2}),\;\forall x>0$.
        \item \label{item:weighted_sum} Suppose $X_1,X_2,\ldots,X_n$ are $n$ independent random variables
        such that $X_k\in \mathrm{subG}(\theta_k^2)$. Then the weighted sum
        satisfies $\sum_{k=1}^n a_k X_k\in \mathrm{subG}(\sum_{k=1}^n a_k^2\theta_k^2)$.
    \end{enumerate}
    We are now ready to prove \eqref{eq:sub_gaussian_ineq}. Since $v\sim \mathcal{N}(0,N_0/(2P_s))$,
    it is easy to verify $v\in \mathrm{subG}(N_0/(2P_s))$ by definition.
    Moreover, we can prove $s_i-\mathbb{E}[{s_i}]\in \mathrm{subG}(1)$ with Theorem~3.1 in \cite{sub_gaussian}.
    Therefore, we can use property \ref{item:weighted_sum}) to get $y-\mathbb{E}[y]\in \mathrm{subG}(\tau^2)$,
    where $\tau^2$ is given in \eqref{eq:sigma}. Finally, property \ref{item:tail_bound})
    leads to
    \begin{equation*}
        q = \Prob(\mathbb{E}[y]-y>\mathbb{E}[y]) \leq \mathrm{exp}(-\frac{\E[y]^2}{2\tau^2}).
    \end{equation*}

\end{proof}
In BPSK detection, we may view $\E[y]^2$ and $\Var[y]$ as
the signal and noise power respectively.
Note that $\Var{[y]}=\sum_{k=1}^K 4\rho_k^2p_k(1-p_k)+N_0/(2P_s)\leq \tau^2$ and
the equality holds when $p_1=p_2=\ldots=p_K=1/2$.
Hence, we refer to $\SNR_{\mathrm{d}}:=\E[y]^2/\tau^2$ as the \emph{detection SNR}.

In the following, we assume that users have i.i.d. datasets and $p_1=p_2=\ldots=p_K=p_{\mathrm{loc}}$.
The detection SNR becomes
\begin{equation}
        \mathrm{SNR}_{\mathrm{d}}=
        (2p_{\mathrm{loc}}-1)^2 \frac{\|\vrho\|_1^2}{\|\vrho\|_2^2+N_0/(2P_s)}.
\end{equation}
Note that for ideal noiseless channels where $N_0=0$ and $\rho_1=\rho_2=\ldots=\rho_K=1$, we have $\SNR_{\mathrm{d}}=(2p_{\mathrm{loc}}-1)^2K$.
This motivates us to define the normalized detection SNR as
\begin{equation}\label{eq:normalized_SNR}
    \overline{\mathrm{SNR}}_{\mathrm{d}} = \frac{1}{K}\frac{\|\vrho\|_1^2}{\|\vrho\|_2^2+N_0/(2P_s)}.
\end{equation}
The advantage of dealing with $\nSNR_{\mathrm{d}}$ is that it depends only on the channels between the FC and users but not on $p_{\mathrm{loc}}$.
Furthermore, we have the following result:
\begin{proposition}
    If $\nSNR_{\mathrm{d}}$ is given by \eqref{eq:normalized_SNR}, then
    \begin{equation*}
        0\leq \nSNR_{\mathrm{d}} \leq 1,
    \end{equation*}
where the second equality holds if and only if $N_0=0$ and $\rho_1=\rho_2=\ldots=\rho_K$.
\end{proposition}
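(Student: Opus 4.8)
The plan is to handle the two inequalities separately and then read off the equality case for the upper bound; apart from a degenerate case the argument is a single application of Cauchy--Schwarz. For the lower bound, observe that the numerator $\|\vrho\|_1^2$ is a square hence nonnegative, while the denominator $\|\vrho\|_2^2 + N_0/(2P_s)$ is a sum of nonnegative terms; assuming it is positive (so that the ratio is well defined), we get $\nSNR_{\mathrm{d}}\ge 0$ immediately, with equality exactly when $\|\vrho\|_1 = 0$.

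For the upper bound I would first discard the noise term from the denominator: since $N_0/(2P_s)\ge 0$,
\begin{equation*}
    \nSNR_{\mathrm{d}} = \frac{1}{K}\frac{\|\vrho\|_1^2}{\|\vrho\|_2^2 + N_0/(2P_s)} \le \frac{1}{K}\frac{\|\vrho\|_1^2}{\|\vrho\|_2^2},
\end{equation*}
so it suffices to prove $\|\vrho\|_1^2 \le K\,\|\vrho\|_2^2$. This is exactly the Cauchy--Schwarz inequality applied to $\vrho$ and the all-ones vector $\vone\in\R^K$, namely $\|\vrho\|_1 = \sum_{k=1}^K 1\cdot\rho_k \le \sqrt{K}\bigl(\sum_{k=1}^K \rho_k^2\bigr)^{1/2} = \sqrt{K}\,\|\vrho\|_2$; squaring gives the claim, hence $\nSNR_{\mathrm{d}}\le 1$.

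For the equality characterization, $\nSNR_{\mathrm{d}} = 1$ forces equality in both steps: equality in dropping the noise term requires $N_0 = 0$ (using $P_s>0$), and equality in Cauchy--Schwarz requires $\vrho$ to be proportional to $\vone$, i.e.\ $\rho_1=\rho_2=\cdots=\rho_K$. Conversely, if $N_0=0$ and $\rho_1=\cdots=\rho_K=:\rho\neq 0$, direct substitution gives $\nSNR_{\mathrm{d}} = \tfrac{1}{K}\cdot\tfrac{(K\rho)^2}{K\rho^2} = 1$. I do not expect any real obstacle here; the only point needing a word of care is the degenerate situation in which the denominator vanishes ($N_0=0$ and all $\rho_k=0$), where $\nSNR_{\mathrm{d}}$ is undefined, and since the Rayleigh coefficients are almost surely nonzero in the intended setting I would simply note this rather than dwell on it.
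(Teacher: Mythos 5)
Your proof is correct and complete: the lower bound is immediate, the upper bound follows from dropping the nonnegative noise term and applying Cauchy--Schwarz to $\vrho$ and $\vone$ (valid here since the $\rho_k=\sqrt{P_L(r_k)}\,|h_k|$ are nonnegative, so $\|\vrho\|_1=\sum_k\rho_k$), and the equality analysis correctly recovers the stated condition $N_0=0$ and $\rho_1=\cdots=\rho_K$. There is nothing to compare against, because the paper explicitly omits its proof ``due to space limitations''; your argument is the standard one the authors almost certainly intended. Your remark about the degenerate case $N_0=0$, $\vrho=\0$ is a genuine (minor) refinement of the proposition as stated, since the ``if and only if'' claim is vacuous there, and flagging it rather than dwelling on it is the right call given that the fading coefficients are almost surely nonzero.
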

\begin{proof}
    We omit the proof due to space limitations.
\end{proof}
Since $\SNR_{\mathrm{d}}$ is proportional
to the number of users $K$ for ideal noiseless channels, we can interpret $\nSNR_{\mathrm{d}}$ as the
effective participation rate of users.

\subsection{Simulation Validations}\label{subsec:failure prob with large scale}
In our model, we have $\rho_k=\sqrt{P_L(r_k)}|h_k|$ where $h_k\sim \mathcal{CN}(0,1)$ i.i.d.,
the pdf of $r_k$ is given by \eqref{eq:location_dist}, and $P_L(\cdot)$ is given by \eqref{eq:path_loss}.
When $K$ is sufficiently large, we can approximate $\nSNR_{\mathrm{d}}$ by the law of large numbers:
\begin{align}
        \overline{\mathrm{SNR}}_{\mathrm{d}} &\approx\frac{(\E[ |h_k|] \E[\sqrt{P_L(r_k)}])^2}{\E [|h_k|^2]\E[P_L(r_k)]+N_0/(2KP_s)} \nonumber
        \\
    &=
     \frac{\pi/4\; \E[\sqrt{P_L(r_k)}]^2}{\E[P_L(r_k)]+N_0/(2KP_s)},\label{eq:snr_d_pl}
\end{align}
where
\begin{equation*}
    \begin{aligned}
    \E[\sqrt{P_L(r_k)}] &= -\frac{\alpha}{4-\alpha}\left(\frac{r_0}{R}\right)^2+
    \frac{4}{4-\alpha}\left(\frac{r_0}{R}\right)^{\frac{\alpha}{2}},\\
    \E[{P_L(r_k)}] &= -\frac{\alpha}{2-\alpha}\left(\frac{r_0}{R}\right)^2+
    \frac{2}{2-\alpha}\left(\frac{r_0}{R}\right)^{{\alpha}}.
    \end{aligned}
\end{equation*}

From \eqref{eq:snr_d_pl} we can see that
when $2KP_s\E[{P_L(r_k)}]/N_0\gg 1$ the channel noise is negligible.
Therefore, when the symbol power $P_s$ exceeds a certain threshold,
$\nSNR_{\mathrm{d}}$ will reach the maximum and remain unchanged. We can observe such
phenomenon from the left figure in Fig.~\ref{fig: path loss}: with
increasing $P_s$, the failure probability will decrease until reaching the error floor.
This may be explained by the fact that the users' signals are coherently
combined at the FC, resulting in a power gain similar to transmit beamforming.
From the above, we conclude that channel noise is not the bottleneck for digital
AirComp-PC at a reasonably large SNR.

Moreover, by Theorem~\ref{thm:chernoff} we have
$q\leq \exp(-K \nSNR_{\mathrm{d}})$, meaning that the failure probability
decays at an exponential rate. However, there exists a huge
gap between the digital AirComp-PC and the ideal majority vote,
which is more evident when $K$ increases. For instance,
when $\alpha=3$ and $R/r_0=30$, from \eqref{eq:snr_d_pl} we get
${\nSNR{}}_{\mathrm{d}}\approx 10.6\%$. This shows that equivalently about only 10.6\% of users
participate.
In the right
figure in Fig.~\ref{fig: path loss},
We also plot the failure probability of the equivalent ideal majority vote (the dash line),
which is indeed close to the actual performance of the digital AirComp-PC.
\begin{figure}[!t]
    \centering
    \includegraphics[width=\linewidth]{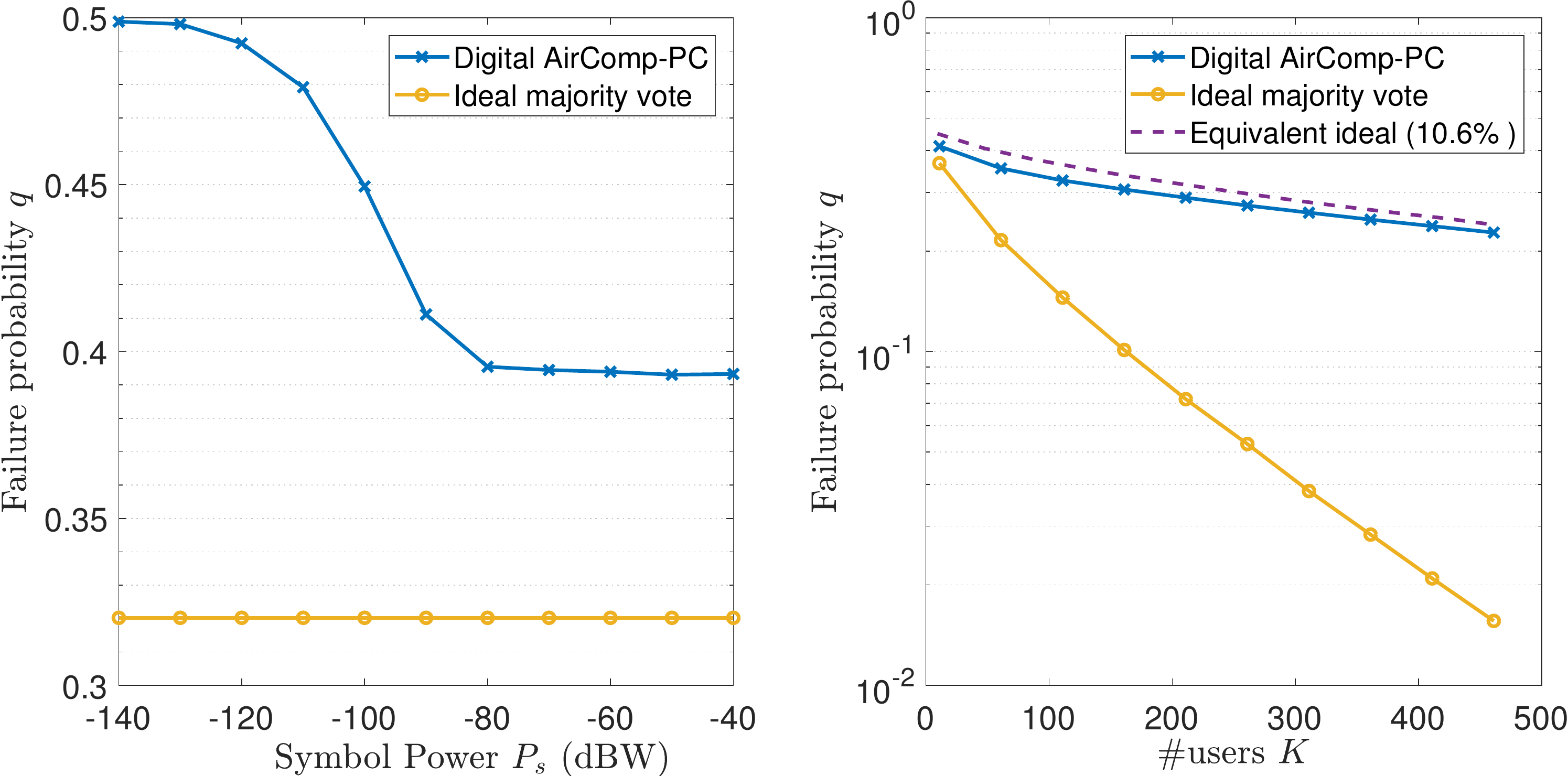}
    \caption{The left figure shows the failure probability $q$ versus symbol power $P_s$
    where $K=21,\,p_{\mathrm{loc}}=0.55$, $\alpha=3$ and $R/r_0=30$ in \eqref{eq:path_loss}.
    The right figure shows
    $q$ versus the number of users $K$ where $P_s=-50$dBW and other parameters are
    the same as the left.} 
    \label{fig: path loss}
\end{figure}

\section{Relay Selection Scheme}\label{sec:relay_cooperation}
To design the relay selection scheme in Section \ref{subsec:cluster}, we
use the normalized detection SNR as the optimization objective to minimize
the failure probability. As in Section \ref{sec:fail_prob_analysis}, we
omit the superscript and the index. The
channel model becomes
\begin{equation*}
    y = \sum_{c=1}^C \rho_c \tilde{s}_c +v,
\end{equation*}
where $\rho_c\in \sH_c\cup \{0\}$ depends on our relay selection scheme;
$\tilde{s}_c=\sign(\sum_{k\in \sU_c}s_k)$ is the majority vote result
within cluster $\sU_c$ and $\Prob[s_k=1]=p_{\mathrm{loc}},\,\Prob[s_k=-1]=1-p_{\mathrm{loc}},\,p_{\mathrm{loc}}>1/2$;
and $v\sim \mathcal{N}(0,N_0/(2P_s))$ is the additive white Gaussian noise.

When the number of users $K_c$ within each cluster is sufficiently large,
by law of large numbers we have
\begin{equation*}
    \tilde{s}_c = \frac{\sum_{k\in \sU_c} s_k}{|\sum_{k\in \sU_c}s_k|} \approx \frac{\sum_{k\in \sU_c} s_k}{K_C(2p_{\mathrm{loc}}-1)}.
\end{equation*}
Hence, the normalized detection SNR can be approximated by
\begin{align}
    {\nSNR}_{\mathrm{d}} 
    &=\frac{1}{C} \frac{\left(\sum_{c=1}^C \rho_c\right)^2}{\sum_{c=1}^C \rho_c^2+K_C(2p_{\mathrm{loc}}-1)^2N_0/(2P_s)} \nonumber\\
    &\approx \frac{1}{C}\frac{\left(\sum_{c=1}^C {\rho_c}\right)^2 }{\sum_{c=1}^C \rho_c^2}, \label{eq:snr_n_relay}
\end{align}
where we discard the noise term in \eqref{eq:snr_n_relay}. From the discussion in
Section \ref{subsec:failure prob with large scale}, we know that the channel noise
is negligible for SNR at a reasonable level.
The relay selection can be formulated as the following
optimization problem:
\begin{equation}\label{eq:opt_relay}
    \begin{aligned}
        \max_{\rho_1,\rho_2,\ldots,\rho_C}\;\; & \frac{1}{C}\frac{\left(\sum_{c=1}^C {\rho_c}\right)^2 }{\sum_{c=1}^C \rho_c^2}\\
        \mathrm{s.t.}\;\;& \rho_c \in  \{\rho_{c,1},\ldots,\rho_{c,L} \}\cup \{0\},\\
    \end{aligned}
\end{equation}
where $\rho_c=0$ corresponds to the case when no relay in cluster $\sU_c$ participates.
This is a typical discrete optimization problem with
$(L+1)^C$ feasible solutions in total. Therefore, we consider two low-complexity
relay selection schemes.

One approach is to select the relay with the strongest gain:
\begin{equation*}
    \rho_c = \max\{\rho_{c,1},\ldots,\rho_{c,L} \},\;c=1,2,\ldots,C.
\end{equation*}
We refer to it as the strongest gain scheme.
However, in the context of digital AirComp-PC, this is far from optimal. Intuitively,
because of the randomness of users' locations, the large-scale fading effects
vary greatly among the clusters. Hence, the signal sent by the relay close to the FC
will drown out the signals from other relays, resulting in the loss of effective voting users.
\begin{algorithm}[!t]
    \caption{Iterative greedy algorithm for ~\eqref{eq:opt_relay}}\label{alg:iterative_greedy}
    \begin{algorithmic}[1]
      \State \textbf{Input:} $\sH_c=\{\rho_{c,1},\ldots,\rho_{c,L}\},\;c=1,2,\ldots,C$
      \State Initialize $\rho_c \leftarrow \max \{\rho_{c,1},\ldots,\rho_{c,L}\},\;c=1,2,\ldots,C$
      \Repeat
      \For {$c=1,2,\ldots,C$}
       \State $a\leftarrow \sum_{k\neq c} \rho_k,\;b\leftarrow \sum_{k\neq c} \rho_k^2$
       \State $\rho_c \leftarrow \argmin_{\rho\in \sH_c\cup \{0\}} |\frac{1}{\rho+a}-\frac{a}{a^2+b}|$ \label{line:one_variable_update}
      \EndFor
      \Until{$\{\rho_1,\rho_2, \ldots,\rho_C\}$ are unchanged}
    \end{algorithmic}
  \end{algorithm}

To mitigate such drawback, we propose an iterative greedy algorithm as shown in Algorithm \ref{alg:iterative_greedy}.
We first use the strongest gain scheme to initialize $\rho_1,\rho_2,\ldots,\rho_C$.
Then we successively optimize one variable at a time while keeping the others fixed
and repeat the process until all the variables no longer change. We can see that
the objective value is non-decreasing after every iteration. Since there are only finite feasible solutions,
Algorithm \ref{alg:iterative_greedy} is guaranteed to terminate in finite steps.


We run numerical experiments to compare the strongest gain scheme with the iterative greedy
algorithm and plot the results in Fig.~\ref{fig:cooperation_error_curve}.
When user clustering and relay selection are adopted, we can see a substantial improvement
upon the digital AirComp-PC without cooperation. Moreover, the
iterative greedy algorithm attains a much lower failure probability than
the strongest gain scheme, and is close to the ideal case when $L=5$.
Together with the discussion in Section~\ref{subsec:failure prob with large scale},
this shows that our cluster-based cooperation scheme can effectively increase
the participation rate of voting users.
\begin{figure}[!t]
    \centering
    \includegraphics[width=0.8\linewidth]{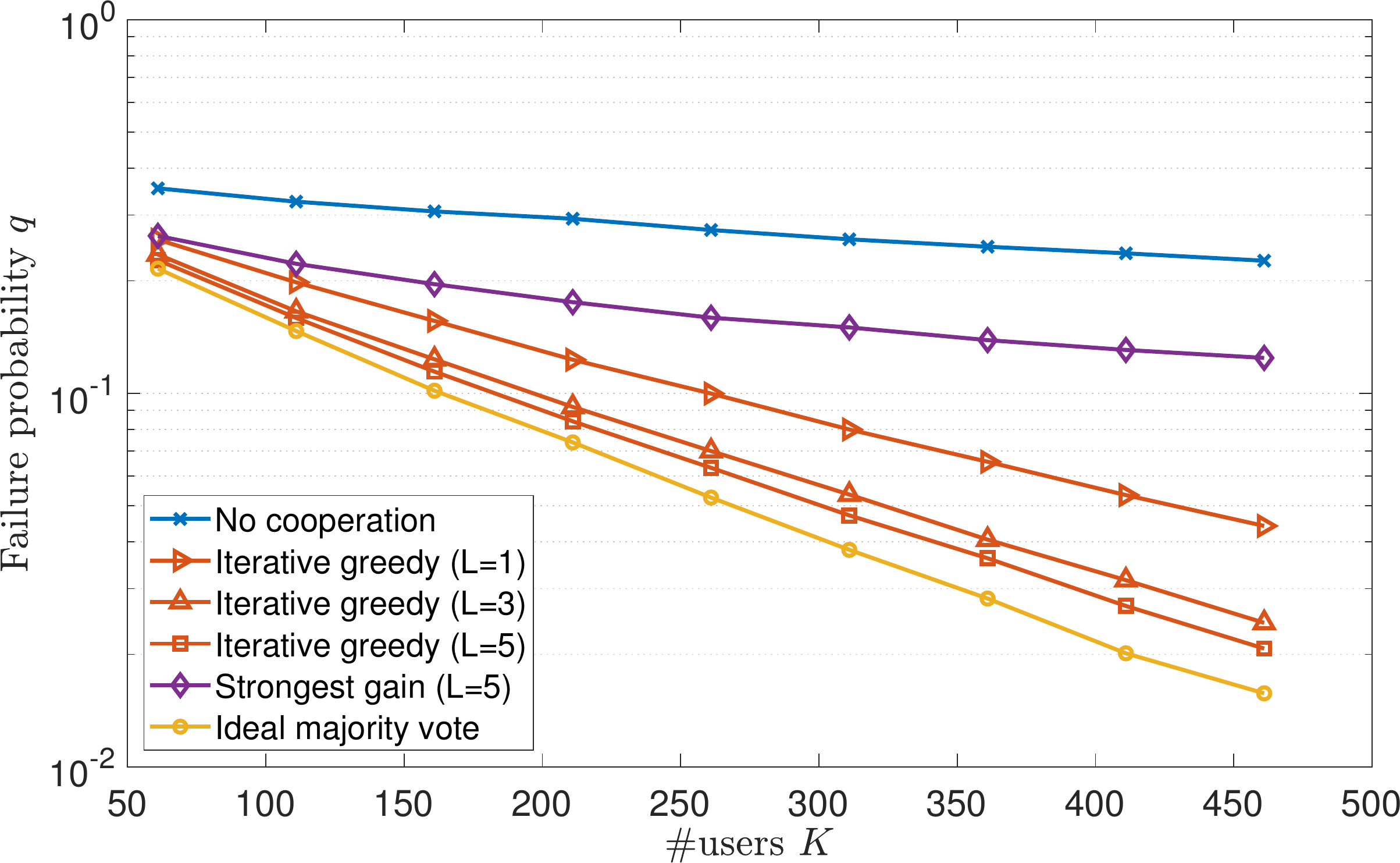}
    \caption{The failure probability versus number of users with a fixed
    size of cluster $K_C=9$. The other parameters are the same as in Fig.~\ref{fig: path loss}.}
    \label{fig:cooperation_error_curve}
\end{figure}

\section{Simulation Results}
We evaluate our relay selection
schemes in an edge learning system with
$K=54$ users. The cell radius is $R=1000$~m, path loss exponent $\alpha=3$, and
$r_0=10$ m in \eqref{eq:path_loss}. The number of subchannels is $M=1000$ and
4-QAM is adopted. The symbol power budget
is $P_s=-50$ dBW and the noise variance $N_0=80$ dBm.
We consider the MNIST digit recognition problem and use Algorithm \ref{alg:signsgd}
to train a multilayer perceptron (MLP) with one hidden layer of 64 units using ReLu activations. 
The training samples are distributed uniformly at random.
In our cooperative scheme, we assume there are $C=6$ clusters each containing
$K_C=9$ users.

As shown in Fig.~\ref{fig:accuracy},
the relay
selection scheme with the iterative greedy algorithm achieves
a very similar performance as the ideal majority vote. The difference of their final
test accuracies is smaller than 0.5\%. In contrast, the
system without cooperation converges more slowly and its
final test accuracy is near $1.2$\% less than the ideal case. This
shows that user clustering and relay selection can effectively combat the
channel fading impairments.

We also test the idea that the normalized SNR in \eqref{eq:normalized_SNR}
can be regarded as the participation rate of voting users.
By Monte Carlo simulations, we obtain that the normalized SNR is around
0.32 in the system without cooperation.
From our discussion in Section \ref{subsec:detection SNR}, the number of effective
users is $54\times 0.32\approx 17$. Therefore, we also plot
the test accuracy for ideal majority vote when $K=17$ (the dash line).
We can see that the performance is indeed close to
that of the system without cooperation, which is consistent with
our analysis.


\begin{figure}[!t]
    \centering
    \includegraphics[width=0.8\linewidth]{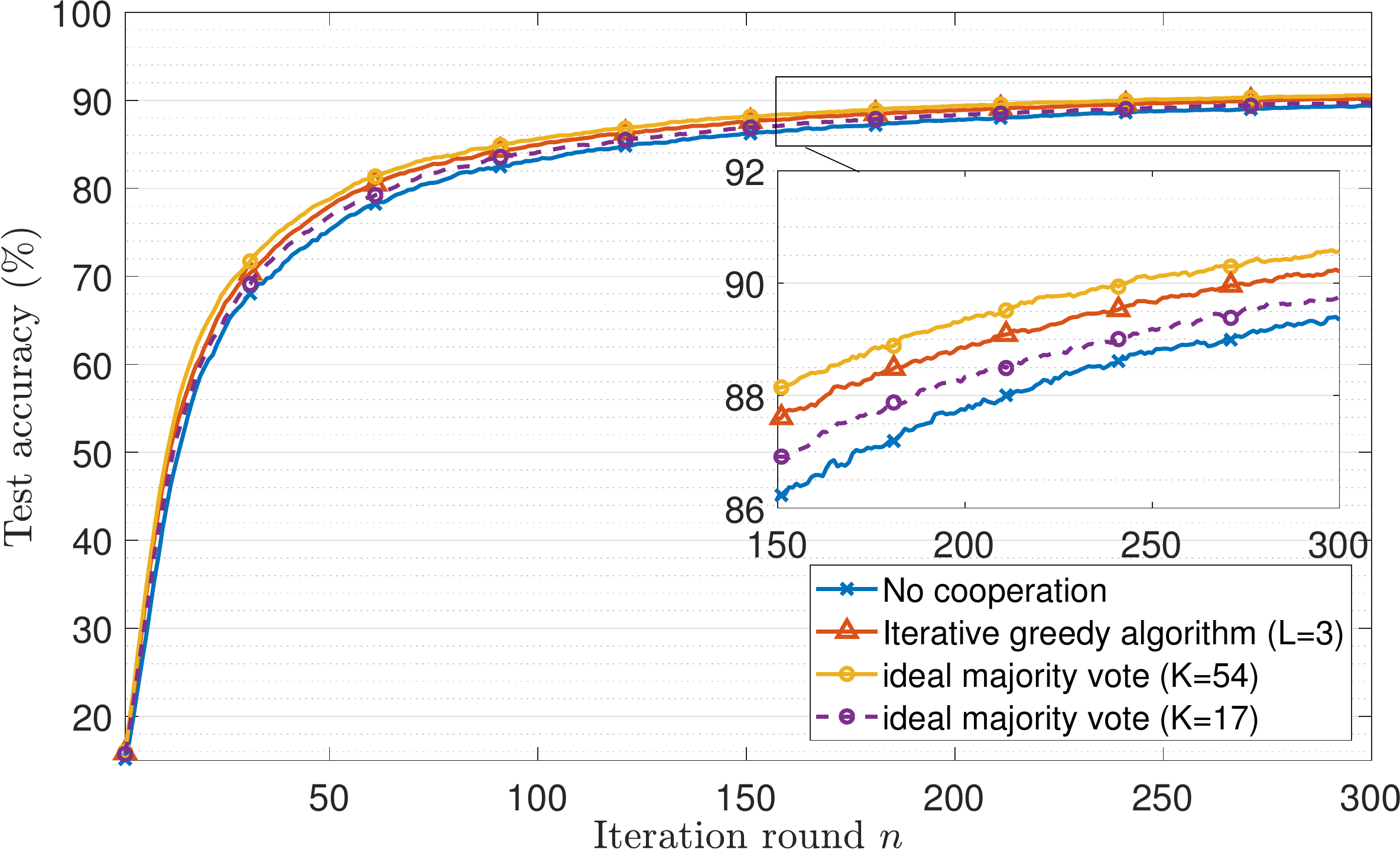}
    \caption{Test accuracy while training a MLP on MNIST dataset. The curves
    are the average results of 5 repeats.}
    \label{fig:accuracy}
\end{figure}

\section{Conclusions}
We proposed a new digital AirComp scheme for federated
edge learning systems that has less stringent requirements on the transmitters.
To characterize the failure probability of majority vote in our scheme, we derived an
upper bound and defined the normalized detection SNR. We found that the impact
of wireless
fading is equivalent to decreasing the number of voting users.
Furthermore, we designed a cluster-based cooperation scheme to combat wireless impairments
and proposed relay selection schemes based on the normalized detection SNR.
Simulations show that the relay selection scheme with the iterative greedy
algorithm can well exploit spatial diversity and achieve a comparable convergence
speed to the system with ideal channels.

\bibliographystyle{IEEEtran}
\bibliography{reference}

\end{document}